    \newwrite\bibnotes
    \def\bibnotesext{Notes.bib}
\write\bibnotes{@CONTROL{REVTEX41Control}}
\write\bibnotes{@CONTROL{%
    apsrev41Control,author="08",editor="1",pages="1",title="0",year="1"}}
\write\@auxout{\string\citation{apsrev41Control}}%
\newtheorem{theorem}{Theorem}
\newtheorem{lemma}{Lemma}
\newtheorem{corollary}{Corollary}
\newtheorem{definition}{Definition}
\newtheorem*{blank*}{}
\begin{document}
\title{Reversible dynamics with closed time-like curves and freedom of choice}

\author{Germain Tobar}
\affiliation{School of Mathematics and Physics,\\The University of Queensland, St.\ Lucia, QLD 4072, Australia}

\author{Fabio Costa}
\affiliation{Centre for Engineered Quantum Systems, School of Mathematics and Physics,\\The University of Queensland, St.\ Lucia, QLD 4072, Australia}

\begin{abstract}
The theory of general relativity predicts the existence of closed time-like curves (CTCs), which theoretically would allow an observer to travel back in time and interact with their past self. This raises the question of whether this could create a grandfather paradox, in which the observer interacts in such a way to prevent their own time travel. Previous research has proposed a framework for deterministic, reversible, dynamics compatible with non-trivial time travel, where observers in distinct regions of spacetime can perform arbitrary local operations with no contradiction arising. However, only scenarios with up to three regions have been fully characterised, revealing only one type of process where the observers can verify to both be in the past and future of each other. Here we extend this characterisation to an arbitrary number of regions and find that there exist several inequivalent processes that can only arise due to non-trivial time travel. This supports the view that complex dynamics is possible in the presence of CTCs, compatible with free choice of local operations and free of inconsistencies.
\end{abstract}

\maketitle

\section{Introduction}
The dominant paradigm in physics relies on the idea that systems evolve through time according to dynamical laws, with the state at a given time determining the entire history of the system. 

General relativity challenges this view. The Einstein equations, describing the relationship between spacetime geometry and mass-energy~\cite{WeinbergGR}, have counterintuitive solutions containing closed time-like curves (CTCs)~\cite{Lanczos:1924kn, Godel:1949eb, Taub:1951, NUT:1963, Kerr:1963, Tipler:1974iw, PhysRevLett.66.1126.1991, Alcubierre_1994, PhysRevD.56.2100.1997, PhysRevD.57.4760.1998, PhysRevD.76.044002.2007, Griffiths:2009exactBook,  Sarma2013, Tippett2017,  Fermi_2018,   Mallary_2018}. An event on such a curve would be both in the future and in the past of itself, preventing an ordinary formulation of dynamics according to an ``initial condition'' problem. The question then arises whether some more general type of dynamics is possible.

Although it is an open question whether CTCs are possible in our universe~\cite{Morris:1988gg, novikov1989analysis, Hawking1992, Visser:1995wormholesBook, Ori2005}, considering dynamics beyond the ordinary temporal view is relevant to other research areas as well. In a theory that combines quantum physics with general relativity, it is expected that spacetime loses its classical properties~\cite{gibbs1995small, Piazza2010}, possibly leading to indefinite causal structures~\cite{hardy2007towards, HardyProbabilityGravity, Zych2019}. In a quite different direction, it has been suggested that quantum physics could be reduced to some kind of ``retrocausal'' classical dynamics \cite{carati1999nonlocality, Weinstein2009, price2012, priceWharton2015, Evans01062013, Wharton2014, Aharonov2016, Leifer2016, Sutherland17, Shrapnel2018causationdoesnot, Wharton2018, Emily2018}.

The main problem arising when abandoning ordinary causality is the so called ``grand father paradox''~\cite{schachner33}: a time traveller could kill her own grandfather and thus prevent her own birth, leading to a logical inconsistency. A popular approach holds that the grandfather paradox makes CTCs incompatible with classical physics, while appropriate modifications to quantum physics could restore consistency~\cite{Deutsch:1991jo, Politzer1994, Pegg:2001wa, Bennett, Bacon2004, Greenberger2005, Svetlichny:2009ve,Svetlichny:2011gq, Lloyd2011exp, Lloyd:2011ir, Ralph2010,Pienaar2011, Ralph:2012cd, Wallman2012, Allen2014, Czachor}. A common feature of the proposals within this approach is that they postulate a radical departure from ordinary physics even in regions of space-time devoid of CTCs, or in scenarios where the time travelling system does not actually interact with anything in the past~\cite{Pienaar2013, Yuan2015}.

A different approach is the so called ``process matrix formalism'', which takes as a starting point the local validity of the ordinary laws of physics and asks what type of global processes are compatible with this assumption~\cite{Oreshkov:2012uh, araujo15, feixquantum2015, Oreshkov2015, Branciard2016, Giacomini2015, Baumann2016, Costa2016, abbott2016, araujo2017purification, Perinotti2017, Abbott2017genuinely, Shrapnel2018, Jia2018, Bavaresco2019semidevice, Jia2019}. This framework enforces that all operations that would normally be possible in ordinary spacetime should still be available in local regions.  First considered in the quantum context, this approach has been applied to classical physics too, with the remarkable discovery of classical processes that are incompatible with any causal order between events~\cite{baumeler14, Baumeler2015, Baumeler2016}. 


In Ref.~\cite{Baumeler2019}, a classical, deterministic version of the formalism was proposed as a possible model for CTCs. In this model, one considers a set of regions that do not contain any, but might be traversed by, CTCs. Agents in the regions receive a classical state from the past boundary, perform an arbitrary deterministic operation on it, and then send the system through the future boundary. Dynamics outside the regions determines the state each agent will observe in the past of the respective region, as a function of the states prepared by other agents. A simple characterisation was found for all processes involving up to three regions; furthermore, it was found that, for three regions, all non causally ordered processes are essentially equivalent. 

In this work, we extend the characterisation of deterministic processes to an arbitrary number of regions. We provide some simple interpretation of the characterisation: when fixing the state on the future of all but two regions, the remaining two must be causally ordered, with only one directional signalling possible. 
We show, by explicit examples, that there are inequivalent, non causally ordered quadripartite processes, which cannot be reduced to tripartite ones. Our results show that CTCs are not only compatible with determinism and with the local ``free choice'' of operations, but also with a rich and diverse range of scenarios and dynamical processes. 

\section{Deterministic processes}
This section aims to revise and summarise the approach of Ref.~\cite{Baumeler2019} and the results that are relevant to the full characterisation of arbitrary deterministic, classical processes.

In ordinary dynamics, a \textit{process} is a function that maps the state of a system at a given time to the state at a future time. Operationally, we can think of the state in the past as a `preparation' and the one in the future as the outcome of a `measurement'. The functional relation between preparation and measurement is typically dictated by the dynamics of the system. For example, field equations determine the field on a future spacelike surface as a function of the field on a past spacelike surface. In a globally hyperbolic spacetime\footnote{A spacetime is said to be globally hyperbolic if and only if it contains a Cauchy surface, namely a closed, acausal surface such that every inextendible causal curve passes through it once (where acausal means that no causal curve starts and ends on the surface) \cite{Minguzzi2019}.}, the functional relation between preparation and measurement corresponds to fixed deterministic dynamics once initial conditions have been set on a Cauchy surface.

The first studies of dynamics in the presence of CTCs focused on spacetime geometries in which a formulation of dynamics as an initial condition problem could be retained \cite{Friedman:1990ja, Echeverria:1991ko, Lossev1992}. In these studies, CTCs only exist in the future of a spacelike surface on which initial conditions could be set. Surprisingly, these studies found that all cases studied had at least one self-consistent solution. This surprising result suggests there may not be any conflict between the existence of CTCs and logical consistency. However, if the spacetime is threaded by CTCs, it is not in general possible to find a spacelike surface to set global initial conditions.

The difficulty associated with setting global initial conditions in spacetimes threaded by CTCs raises the question as to whether there exists a more general description of a process which can describe dynamics without global initial conditions. The existence of such processes could be used to model dynamics incompatible with the existence of a Cauchy surface. Of course, the non-existence of a Cauchy surface (and therefore the spacetime being non-globally hyperbolic), does not immediately imply the presence of CTCs. However, the development of such a generalised type of process in Ref.~\cite{Baumeler2019} provides a framework to describe dynamics compatible with non-trivial time travel in a non-globally hyperbolic spacetime. The remainder of this section reviews how such a generalised process can be constructed to distinguish whether the process is compatible with spacetimes containing CTCs (or more generally non-trivial time travel in a non-globally hyperbolic spactime), while remaining consistent with free choice, locality and the absence of a grandfather paradox.

In order to generalise the ordinary characterisation of a process as a function with initial conditions, we consider $N$ spacetime regions (which we henceforth refer to as \textit{local regions} or \textit{regions}), in which agents can perform arbitrary operations. In particular, each agent will observe a state coming from the past of the region and prepare a state to send out through the future. The key assumption is that the actions of the agents in the regions are independent from the relevant dynamics governing the exterior of the regions. In other words, agents retain their ``freedom of choice'' to perform arbitrary operations. However, the system's dynamics, together with the spacetime's geometry, constrains the system's behaviour outside the regions. Therefore outside the spacetime regions, the dynamics is completely deterministic and fixed, once appropriate boundary conditions have been specified. In this approach, a \textit{process} should determine the outcomes of measurements performed by an agent, as a function of the operations performed by the others. In this way, we define a \textit{process} as a generalised model for the dynamics connecting distinct spacetime regions. 

We shall assume that the regions in which the agents act are connected, non-overlapping, and their boundaries can be divided into two subsets (one ``past'' and one ``future'') that do not contain timelike pieces. This ensures that, in a CTC-free spacetime (or more generally, a globally hyperbolic spacetime), each region is either in the future, in the past, or spacelike to any other, so a violation of causal order between the regions can be attributed to a lack of causal order in the background spacetime\footnote{If we were to consider timelike boundaries, it would be possible to have time-like curves which exit a region, cross into another one, and then go back to the original region, all within a globally hyperbolic spacetime.}.
Furthermore, we shall restrict to local regions that do not contain CTCs, or more generally we require that these localised regions are indistinguishable from localised regions in a globally hyperbolic spacetime. We also assume that any  timelike curve which enters through the past (future boundary) exits through the future boundary (past boundary). This enables a simple characterisation of local operations in the regions as functions from past to future boundary.

Besides the above conditions, we do not make any further assumption on the spacetime geometry or causal structure, nor on the dynamics of the physical system on which the agents act. The approach only treats abstractly the functional relations between degrees of freedom and only requires that local agents retain their freedom to choose arbitrary operations, without generating any logical inconsistency.

Our goal is to understand whether such abstract constraints are compatible with processes that can only be realised through non-trivial causal relations (causal relations that can only arise due to non-trivial time travel in a non-globally hyperbolic spacetime). By causal relations between regions, here we mean the possibility for agents acting in the regions to exchange non-faster-than-light signals between each other. In a globally hyperbolic spacetime (and under the assumptions we have made on the local regions), signalling is a one-way relation: if an agent $A$ can signal to an agent $B$, then $B$ cannot signal to $A$. Therefore, any process where causal relations do not define a partial order among the regions would be due to non-trivial time travel in a non-globally hyperbolic spacetime.

In keeping with previous literature, and to simplify the discussion, in the following we assume that CTCs are responsible for non-trivial causal relations, such that the local regions do not contain any, but may be traversed by CTCs. However, these non-trivial causal relations could in principle emerge in other non-globally hyperbolic spacetimes without CTCs, such as causal but not strongly causal spacetimes \cite{Minguzzi2019}.

\subsection{The Process Function}
In order to develop the formalism for classical, deterministic dynamics of local regions in the presence of CTCs, we will assign the boundaries of these local regions classical state spaces. The state spaces $\mathcal{A}_i$ and $\mathcal{X}_i$ describe the physical degrees of freedom localised on the past and future boundaries respectively of a local region $i$. For example, they may correspond to a field defined on the background spacetime. In this case, a state in one of the classical state spaces we have defined, would be a function on the boundary of the local region, describing a spacelike field configuration. Individual states will be denoted as $a_i \in \mathcal{A}_i$, $x_i \in \mathcal{X}_i$. A classical, deterministic operation in the local region will be denoted by the function $f_i: \mathcal{A}_i \rightarrow \mathcal{X}_i$ (Fig.~\ref{function}). The function $f_i$ transforms the input state $a_i$ from the past boundary to the output state $x_i$ at the future boundary. Therefore, the function $f_i$ maps between the physical degrees of freedom on the past and future boundaries respectively. The function $f_i$ physically corresponds to the local operation an agent can perform on the input state in a particular spacetime region. Therefore, in order to allow freedom of choice over the operations that an agent can perform in a local region of spacetime, we do not impose any additional condition on the function $f_i$, apart from requiring that it satisfies the definition of a function. As a result, the local functions $f_i$ are not required to be invertible. Allowing the local operations performed by experimenters to be non-invertible functions represents the experimenters' ability to delete information by accessing a reservoir not included in the physical degrees of freedom of interest. We denote $\mathcal{D}_i := \{f_i: \mathcal{A}_i \rightarrow \mathcal{X}_i \}$ to be the set of all possible functions in region $i$. In order to refer to a collection of objects for all regions, we drop the index. For example, the set of all possible inputs for $N$ distinct local regions will be denoted $\mathcal{A} \equiv \mathcal{A}_1 \times . . . \times \mathcal{A}_N$.

We will use the notation $\mathcal{A}_{\setminus i}= \mathcal{A}_1\times\dots\times\mathcal{A}_{i-1}\times \mathcal{A}_{i+1}\times\dots\times\mathcal{A}_N$,  $a_{\setminus i}=\left\{a_1,\dots a_{i-1},a_{i+1},\dots,a_N\right\}$, {\it etc.}, to denote collections with the component $i$ removed. Appropriate reordering will be understood when joining variables, for example in expressions as $a=a_i\cup a_{\setminus i}$ {, $f(a)=f(a_i,a_{\setminus i})$, and so on}.

\begin{figure}[h] 
\begin{center}
\includegraphics[scale  = 0.5]{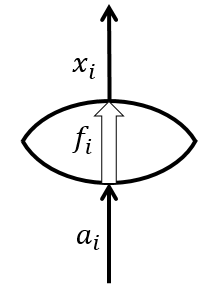}
    \caption{\label{function} A local bounded region $i$ of spacetime with a past and a future boundary. The boundaries do not contain time-like pieces. A local operation in the region is represented by a function $f_i$, which maps the input state $a_i$ containing the physical degrees of freedom on the past bundary, to the output state $x_i$ containing the physical degrees of freedom on the future boundary.}
\end{center}
\end{figure}
One of the requirements of a deterministic framework for local regions in the presence of CTCs, is that the framework must be able to predict the state on the past boundary of each local region. In the presence of CTCs, the state on the past boundary of each local region can depend on all local operations (In a CTC free spacetime, the state on the past-boundary of a region would only depend on operations in its past). The dependence on local operations can be described with a function $\omega \equiv \{\omega_1,...,\omega_N\}: \mathcal{D} \rightarrow \mathcal{A}$ which maps the local operations performed in each region to the input state on the past boundary of each local region \cite{Baumeler2019}. The function $\omega$ will be henceforth labelled as a \textit{process}.

The function $\omega$ will remain general, only being restricted by a weak form of locality. Locality requires that, once the state at the boundary of a region is fixed, the details of what happens inside the region should not be relevant to the exterior dynamics. The local field equations typically used in physics all satisfy this requirement. In order to formalise this requirement for locality, for every process $\omega$ there must exist an additional function $w: \mathcal{X} \rightarrow \mathcal{A}$ such that\footnote{Ref.~\cite{Baumeler2019} defined deterministic processes through a different but equivalent self-consistency condition.}
\begin{equation}
 \omega\left(f\right)=w\left(f\left(\omega(f)\right)\right) \quad \forall f\in \mathcal{D}\,.
\label{consistency}
\end{equation}
We will henceforth refer to a function that satisfies the above consistency condition as a \textit{process function} (Fig.~\ref{processfn}). The \textit{process function} maps the output state on the future boundaries to the input states on the past boundaries of all regions. In a field theory without the presence of CTCs, which is compatible with a causally ordered spacetime, the process function would describe how the field on the future boundary of a region is mapped to the resulting field on the past boundary of a second local region, where the first region is in the causal past of the second region. In the presence of CTCs, the process function represents more general constraints on the boundary configurations of the field, which now can all potentially depend on each other. In this way, the \textit{process function} generalises the notion of dynamics, so that it applies to spacetimes which contain CTCs, as well as spacetimes that are absent of CTCs. The \textit{process function} can therefore be used identify the causal relations that can arise only in the presence of CTCs, without any reference to CTC solutions of the Einstein field equations. This is significant because it enables us to determine what types of non-trivial causal relations are possible without the development of a logical inconsistency such as a grandfather paradox. The investigation of these causal relations with the process function formalism provides a promising platform to investigate whether these non-trivial causal relations can be achieved in specific spacetime geometries. Hence, the consideration of the causal relations in the process function formalism will provide a step in the direction of addressing how CTCs can be physically realised without any grandfather paradox. 

It has been shown in Ref.~\cite{Baumeler2019}, that a necessary and sufficient condition for a process function is that $w \circ f$ has a unique fixed point for every local operation $f$: 
\begin{equation} \label{condition2}
    \forall f \; \exists! \; a \textrm{ such that }\quad w \circ f(a) = a.
\end{equation}
In the following, we will work with process functions, rather than processes, and use the fixed point condition~\eqref{condition2} as the defining property. 

An important property of process functions is that an observer in a localised region cannot use it to send information back to herself. Intuitively, this prevents paradoxes, such as an agent attempting to warn her past self to avoid a particular event, thus removing the motivation for her to warn her past self. Formally, this means that the input of each local region is independent of the output of the same region. If we consider a single local region, the only process function for this sole local region which satisfies the fixed point condition~\eqref{condition2} is a constant: $w \left(x\right) = a$. Therefore, an agent in a local region being unable to interact with her past is a feature of all process functions. In this way, the absence of a grandfather paradox is a direct result of condition~\eqref{condition2}. This feature of process functions imposes the constraint that each component of a process function $w$ has to be independent of the output of the same region:
\begin{equation} \label{indep}
    w_i(x) = w_i(x_{\setminus i}),
\end{equation}
where, $x_{\setminus i}$ is the set of outputs of all regions except the $i$th region. Note that Eq.~\eqref{indep} implies that a process function can be described in terms of a set of functions $w_1: \mathcal{X}_{\setminus 1} \rightarrow \mathcal{A}_1$, ..., $w_n:\mathcal{X}_{\setminus n} \rightarrow \mathcal{A}_n$.

\begin{figure}[h]
\begin{center}
\includegraphics[scale  = 0.45]{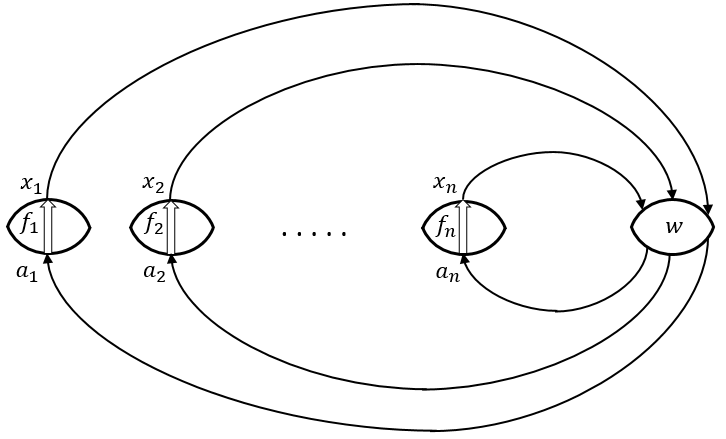}
\caption{ \label{processfn} A process function $w$ can be applied to model the interaction of distinct localised space-time regions with CTCs. A non-trivial dependency of each region's input on the other regions' outputs implies that the process function can only arise in the presence of CTCs.}
\end{center}
\end{figure}

\subsection{Reduced Processes}
Before we go into detail about the characterisation of process functions, we must firstly make some important definitions and consider some important properties of process functions. 
\begin{definition} \label{def1}
Consider a function $w: \mathcal{X} \rightarrow \mathcal{A}$, such that, for each region $i = 1, ....,N$, $w_i(x) = w_i(x_{\setminus i})$. For a particular local operation $f_i: \mathcal{A}_i \rightarrow \mathcal{X}_i$ we define the \textbf{reduced function} $w^{f_i}: \mathcal{X}_{\setminus i} \rightarrow \mathcal{A}_{\setminus i}$ on the remaining regions through the composition of $w$ with $f_i$:

\begin{equation}\label{reduced}	w^{f_i}_j\left(x_{\setminus i}\right):= w_j\left(x_{\setminus i},f_i\left(w_i\left(x_{\setminus i}\right)\right)\right)\,, \; i\not=j\,.
\end{equation}
\end{definition}
This definition is important for formalising the intuition that if we fix the operation for a particular local region, there should still exist a process for the remaining regions. The definition of the reduced function plays an important role in the investigation of the properties of multipartite process functions. 
\begin{lemma}[Lemma 3 in Ref.~\cite{Baumeler2019}] \label{lemma}
	Given a function \mbox{$w:\mathcal{X}\rightarrow \mathcal{A}$}, such that, for each region $i=1,\dots,N$, \mbox{$w_i(x)=w_i(x_{\setminus i})$}, we have
\begin{enumerate}[(i)]
	\item If $w$ is a process function, then $w^{f_i}$ is also a process function for every region $i$ and operation $f_i$.
	\item If there exists a region $i$ such that, for every local operation $f_i$, $w^{f_i}$ is a process function, then $w$ is also a process function.
\end{enumerate}
\end{lemma}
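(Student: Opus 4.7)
The plan is to use the fixed-point characterisation \eqref{condition2} as the working definition of a process function throughout, and to set up a bijection between fixed points of $w\circ f$ (for $f=(f_i,f_{\setminus i})$) and fixed points of $w^{f_i}\circ f_{\setminus i}$. Once this bijection is in place, both directions (i) and (ii) follow immediately by counting: $w\circ f$ has exactly one fixed point for every $f$ if and only if $w^{f_i}\circ f_{\setminus i}$ has exactly one fixed point for every $f_{\setminus i}$.

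The construction of the bijection is where the real work sits, and it is essentially a direct calculation that exploits the independence property \eqref{indep} and the fact that local operations act componentwise, i.e.\ $f(a)=(f_1(a_1),\dots,f_N(a_N))$. In one direction: given a fixed point $a\in\mathcal{A}$ of $w\circ f$, I claim $a_{\setminus i}$ is a fixed point of $w^{f_i}\circ f_{\setminus i}$. To verify this, note that the $i$-th component of the fixed-point equation reads $a_i=w_i(f(a)_{\setminus i})=w_i(f_{\setminus i}(a_{\setminus i}))$, the last equality using \eqref{indep} together with the componentwise action of $f$. Substituting this into \eqref{reduced} for $j\neq i$ gives $w^{f_i}_j(f_{\setminus i}(a_{\setminus i}))=w_j(f_{\setminus i}(a_{\setminus i}),f_i(a_i))=w_j(f(a)_{\setminus j})=a_j$, again using \eqref{indep} to drop a redundant $j$-th argument.

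In the opposite direction: given a fixed point $b_{\setminus i}\in\mathcal{A}_{\setminus i}$ of $w^{f_i}\circ f_{\setminus i}$, define $b_i:=w_i(f_{\setminus i}(b_{\setminus i}))$ and set $b:=(b_i,b_{\setminus i})$. The $i$-th fixed-point equation for $w\circ f$ holds by construction of $b_i$ (plus independence). For $j\neq i$, unfolding the definition of the reduced function gives $b_j=w^{f_i}_j(f_{\setminus i}(b_{\setminus i}))=w_j(f_{\setminus i}(b_{\setminus i}),f_i(b_i))=w_j(f(b)_{\setminus j})$, so $b$ is indeed a fixed point of $w\circ f$. These two constructions are inverses of each other (the $i$-th coordinate is uniquely determined by the remaining ones via $w_i$ and $f_{\setminus i}$), which establishes the bijection.

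With the bijection in hand, part (i) follows: if for every $f$ there is a unique fixed point of $w\circ f$, then in particular for every $f_{\setminus i}$ the bijection yields a unique fixed point of $w^{f_i}\circ f_{\setminus i}$. Part (ii) follows equally: if for the privileged $i$ and every $f_i$ the map $w^{f_i}$ is a process function, then for every $f_{\setminus i}$ there is a unique fixed point of $w^{f_i}\circ f_{\setminus i}$, and the bijection lifts it to the unique fixed point of $w\circ f$, so $w$ satisfies \eqref{condition2}. The only mildly delicate point in the whole argument is keeping the index bookkeeping straight when applying \eqref{indep} to drop or reinstate the $i$-th and $j$-th coordinates; there is no genuine obstacle beyond this, since the fixed-point criterion does all the heavy lifting.
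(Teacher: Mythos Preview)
Your argument is correct: the bijection between fixed points of $w\circ f$ and fixed points of $w^{f_i}\circ f_{\setminus i}$ is set up cleanly, both directions of the construction are verified, and the inverse property is exactly guaranteed by the $i$-th component of the fixed-point equation, $a_i=w_i(f_{\setminus i}(a_{\setminus i}))$. From there, (i) and (ii) are indeed immediate by counting fixed points.

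As for comparison: the present paper does not actually prove this lemma---it is imported verbatim as Lemma~3 of Ref.~\cite{Baumeler2019} and used as a black box. Your fixed-point bijection is the natural self-contained route and is essentially what the original reference does; the only remark worth adding is that you nowhere need the reduced function $w^{f_i}$ itself to satisfy the independence condition~\eqref{indep}, since you work directly with the fixed-point criterion~\eqref{condition2} throughout, which is exactly how the paper defines ``process function''.
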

Following the result of this lemma, we can conclude that $w$ is a process function if and only if the corresponding reduced function $w^{f_i}$ is also a process function.

Definition \ref{def1} can be modified to apply to a process in which we fix a particular region's output instead of fixing a particular local operation. These processes are of interest because they correspond to the subset of reduced functions where the fixed operation $f_i$ is a constant. A constant is a valid choice of the function because we do not require it to be invertible.
\begin{definition} \label{def2}
Consider a function $w: \mathcal{X} \rightarrow \mathcal{A}$, such that, for each region $i = 1, ....,N$, $w_i(x) = w_i(x_{\setminus i})$. For a particular region's output $x_i \in \mathcal{X}_i$, we define the \textbf{output reduced function} $w^{x_i}: \mathcal{X}_{\setminus i} \rightarrow \mathcal{A}_{\setminus i}$ on the remaining regions to denote the function in which we have fixed the output of the $i$th region:
\begin{equation}
\begin{split}
    w^{x_i}(x_{\setminus i}) \; : = \; &\{w_1(x_{\setminus 1}),....w_{i-1}(x_{\setminus \{i-1\}}), \\ 
    &w_{i+1}(x_{\setminus \{i+1\}}),... w_n(x_{\setminus n}) \}.
    \end{split}
\end{equation}
\end{definition}

It is clear that an output reduced function is also a reduced function according to definition \ref{def1}, due to the output reduced function corresponding to the case where the local operation in each region is restricted to be a constant. We can apply definition \ref{def2} to denote $w^{x_{\setminus\{ i,j\}}}$ as the output reduced function in which we fixed the outputs of all regions except regions $i$ and $j$. 

While definition \ref{def1} and \ref{def2} are similar, the distinction between the reduced function and the output reduced function is important for the characterisation of multipartite process functions.

\subsection{Signalling} \label{signal}
In order to understand how different parties in distinct regions of spacetime signal to each other we must define what it means for one observer to signal to another observer. 

Equation \eqref{indep} describes how an observer in a local region can not signal to their own past. This is consistent with the following definition of no-signalling.
\begin{definition} \label{nosignalling}
 Given a process function $w:\mathcal{X}\rightarrow \mathcal{A}$, we say that region $j$ cannot signal to region $i$ if
\begin{equation} \label{indep2}
    w_i(x_j,x_{\setminus j}) = w_i(x'_j,x_{\setminus j})\quad \forall\; x \in \mathcal{X},\, x'_j \in \mathcal{X}_j,
    \end{equation}
which we can abbreviate as $w_i(x) = w_i(x_{\setminus j})$.
\end{definition}
We define \textit{signalling} as the negation of definition \ref{nosignalling}.

Signalling is useful to establish whether a process is compatible with a given causal structure, as a region can only signal to regions in its future. This definition of signalling is consistent the aim of constructing the framework such that in a spacetime without CTCs, signalling between regions defines a relation of partial order. However, the presence of CTCs does not automatically allow arbitrary signalling, as the consistency condition \eqref{condition2} imposes strong constraints on the process function.

As we will show below, it is convenient to characterise process functions in terms of a more refined notion of signalling. In general, the possibility to signal from a region to another can depend on the outputs of all other regions. It is useful to capture this as follows:

\begin{definition}
Given two regions $i$ and $j$, a process function $w:\mathcal{X}\rightarrow \mathcal{A}$, and an output state $\tilde{x}_{\setminus \{i,j\}} \in \mathcal{X}_{\setminus \{i,j\}}$, we say that $j$ cannot signal to $i$ conditioned on $\tilde{x}_{\setminus \{i,j\}}$ if 
\begin{equation} \label{signalcon}
    w_i^{\tilde{x}_{\setminus \{i,j\}}}(x_j) = w_i^{\tilde{x}_{\setminus \{i,j\}}}(x'_j)\quad \forall\; x_j, \, x'_j \in \mathcal{X}_j.
\end{equation}
\end{definition}

For example, for certain process functions $w$, there can exist $x_{\setminus \{i,j\}} \in \mathcal{X}_{\setminus \{i,j\}}$ such that each of the components $w_i^{x_{\setminus \{i,j\}}}$ and $w_j^{x_{\setminus \{i,j\}}}$ are a constant. In these cases neither region can signal to the other. However, there may also exist another choice of outputs  $x_{\setminus \{i,j\}}' \in \mathcal{X}_{\setminus \{i,j\}}$ such that signalling occurs between regions $i$ and $j$.

We now have a framework which describes general deterministic dynamics in the presence of CTCs. This framework is characterized by the process function $w$ which maps the output states on the future boundary of each local region to the input states on the past boundary of each local region. Condition \eqref{condition2} allows freedom of choice for the operations performed by the observer in each region. Condition \eqref{indep} guarantees that there is no paradox resulting from the operations performed in the presence of CTCs. In order to further understand communication between observers in the presence of CTCs, we must develop a  characterisation of the process function $w$ which describes how these observers can communicate.

\subsection{Characterization of Process Functions}
The simplest and most intuitive process functions are causally ordered ones. For example, consider three observers in three distinct regions, which we label regions 1, 2 and 3 respectively. If there exists causal order between these regions such that $1 \prec 2 \prec 3$, then the process function is given by $w_1(x) = a$ (constant), $w_2(x) = w_2(x_1)$ and $w_3(x) = w_3(x_1,x_2)$. For such causally ordered process functions condition \eqref{condition2} is satisfied. However, these trivial, causally ordered process functions are compatible with spacetimes that do not contain CTCs. The generality of the process function allows it to model spacetime without causal order as well, such as the non-trivial causal relations which arise due to the presence of CTCs. We are interested in whether non-trivial process functions exist in the presence of CTCs. The existence of non-trivial process functions would add support to the argument that CTCs can exist without a violation of locality or the creation of a grandfather paradox. This is because conditions \eqref{condition2} and \eqref{indep} restrict process functions to avoid such inconsistencies. In order to answer whether non-trivial process functions can exist in the presence of CTCs, we must develop a characterization of process functions for an arbitrary number of regions. In other words, we want to find a way to tell whether a generic function $w: \mathcal{X} \rightarrow \mathcal{A}$ satisfies the fixed point condition \eqref{condition2}.

 Ref.~\cite{Baumeler2019} characterised process functions with up to three regions. For a single region, condition \eqref{indep} requires that the process function has to be a constant: $w(x)~=~a \; \forall \; x$. Bipartite process functions are characterized by three conditions:
\begin{enumerate}
	\item[(i)] $w_1(x_1,x_2)=w_1(x_2)\,$,
	\item[(ii)] $w_2(x_1,x_2)=w_2(x_1)\,$,
	\item[(iii)] at least one of~$w_1(x_2)$ or~$w_2(x_1)$ is constant.
\end{enumerate}
It is clear that (i) and (ii) follow from condition \eqref{indep}, while (iii) follows from condition \eqref{condition2}. As a result, bipartite process functions only allow one-way signalling. 

In order to characterise tripartite process functions, we must consider three distinct regions which we label 1, 2 and 3. This process function has three components $a_1 = w_1(x_2, x_3)$, $a_2 = w_2(x_1, x_3)$ and $a_3 = w_3(x_1, x_2)$. Ref.~\cite{Baumeler2019} proves the following characterisation of tripartite process functions, where the output variable of one region `switches' the direction of signalling between the other two regions.

\begin{theorem}[Tripartite process functions, Theorem 3 in Ref.~\cite{Baumeler2019}]\label{3charact}
	Three functions $w_1:\mathcal{X}_2\times \mathcal{X}_3\rightarrow \mathcal{A}_1$, $w_2:\mathcal{X}_1\times \mathcal{X}_3\rightarrow \mathcal{A}_2$, $w_3:\mathcal{X}_1\times \mathcal{X}_2\rightarrow \mathcal{A}_3$ define a process function if and only if each of the output reduced functions
	\begin{align}
	w^{x_3}(x_1,x_2):=&\left\{w_1(x_2,x_3), w_2(x_1,x_3)\right\},\\
	w^{x_1}(x_2,x_3):=&\left\{w_2(x_1,x_3), w_3(x_1, x_2)\right\},\\
	w^{x_2}(x_1,x_3):=&\left\{w_1(x_2,x_3), w_3(x_1,x_2)\right\}
	\end{align}
is a bipartite process function for every $x_3\in \mathcal{X}_3$, $x_1\in \mathcal{X}_1$, $x_2\in \mathcal{X}_2$ respectively.
\end{theorem}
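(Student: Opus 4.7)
The forward direction is a direct corollary of Lemma~\ref{lemma}(i). For each region $i$ and any fixed $\bar{x}_i \in \mathcal{X}_i$, take the constant operation $f_i^{\bar{x}_i}: a_i \mapsto \bar{x}_i$. Then the reduced function $w^{f_i^{\bar{x}_i}}$ from \eqref{reduced} coincides identically with the output-reduced function $w^{\bar{x}_i}$, since $f_i^{\bar{x}_i}(w_i(x_{\setminus i})) = \bar{x}_i$ regardless of the argument. Lemma~\ref{lemma}(i) then gives that $w^{f_i^{\bar{x}_i}}$ is a process function; being defined on only two regions, it is a bipartite process function.

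For the converse I would invoke Lemma~\ref{lemma}(ii): it suffices to find a single region $i$ such that $w^{f_i}$ is a bipartite process function for every local operation $f_i \in \mathcal{D}_i$. I would try $i=3$ and, for arbitrary $f_3$, verify the three defining conditions of the bipartite characterization for $w^{f_3}$. Condition (i), that $w^{f_3}_1(x_1,x_2) = w_1(x_2, f_3(w_3(x_1,x_2)))$ is independent of $x_1$, follows by fixing $x_2$ and invoking the hypothesis that $w^{x_2}$ is a bipartite process function: by its condition (iii), either $w_1(x_2,\cdot)$ is constant in $x_3$ (in which case the composition is trivially $x_1$-independent) or $w_3(\cdot,x_2)$ is constant in $x_1$ (in which case $f_3(w_3(x_1,x_2))$ is $x_1$-independent, so the composition again has no $x_1$-dependence). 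Condition (ii) follows symmetrically from the bipartite property of $w^{x_1}$.

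The delicate step is condition (iii): for every $f_3$, at least one of $w^{f_3}_1$ or $w^{f_3}_2$ must be constant. Here I would bring in the hypothesis that $w^{x_3}$ is a bipartite process function for every $x_3$. The plan is a proof by contradiction: assume both $w^{f_3}_1$ and $w^{f_3}_2$ are non-constant, pick witnesses $x_2 \neq x_2'$ with $w^{f_3}_1(x_2) \neq w^{f_3}_1(x_2')$ and $x_1 \neq x_1'$ with $w^{f_3}_2(x_1) \neq w^{f_3}_2(x_1')$, and examine the induced outputs $\bar{x}_3 = f_3(w_3(x_1,x_2))$ at the four combinations of witnesses. The bipartite property of each corresponding $w^{\bar{x}_3}$ forces a specific constancy in $w_1(\cdot,\bar{x}_3)$ or $w_2(\cdot,\bar{x}_3)$; propagating these back through the non-signalling conditions (i)--(ii) already established should produce an incompatibility with the chosen witnesses. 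If the constancy patterns happen to align unfavourably for $i=3$, one restarts the argument with $i=1$ or $i=2$, and a finite case analysis over the three candidate reduction regions identifies at least one that succeeds.

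The main obstacle is precisely condition (iii). Conditions (i) and (ii) use the bipartite hypotheses in a fairly direct pointwise manner, but the hypotheses themselves give constancy only when the ``fixed'' output is treated as a free parameter, whereas in the reduced function $w^{f_3}$ the relevant $\bar{x}_3$ is determined self-consistently through $f_3\circ w_3$. Tracking how the bipartite switch structure of $w^{x_3}$ is transported through this self-referential dependence, and selecting the reduction region accordingly, is the combinatorial heart of the theorem and is where I expect the bulk of the technical work to lie.
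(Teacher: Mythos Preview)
The paper does not contain its own proof of this statement: Theorem~\ref{3charact} is quoted as Theorem~3 of Ref.~\cite{Baumeler2019} and is used as a black box (notably as the base case $P[3]$ in the induction for Theorem~\ref{fixedthm}). There is therefore nothing in the present paper to compare your proposal against.

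On the merits of the proposal itself: your forward direction is correct and efficient --- taking the constant operation $f_i^{\bar{x}_i}$ makes $w^{f_i^{\bar{x}_i}}$ literally equal to $w^{\bar{x}_i}$, and Lemma~\ref{lemma}(i) finishes it. For the converse, your arguments for conditions (i) and (ii) of the bipartite characterisation of $w^{f_3}$ are sound: the dichotomy coming from the bipartite property of $w^{x_2}$ (resp.\ $w^{x_1}$) kills the $x_1$- (resp.\ $x_2$-) dependence exactly as you describe.

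The genuine gap is condition (iii), and your own final paragraph already diagnoses it. Two concrete issues. First, Lemma~\ref{lemma}(ii) requires a \emph{single} region $i$ such that $w^{f_i}$ is a process function for \emph{every} $f_i\in\mathcal{D}_i$; your fallback ``if $i=3$ fails for some $f_3$, restart with $i=1$ or $i=2$'' is not admissible, since the choice of $i$ cannot depend on $f_i$. You must commit to one $i$ (in fact any fixed $i$ works, but that is what has to be shown). Second, the contradiction sketch is only a plan: you name witnesses and the four induced values $\bar{x}_3=f_3(w_3(x_1,x_2))$, but you do not actually carry out the propagation of the constancy constraints from the $w^{\bar{x}_3}$ hypotheses back through the self-referential dependence $\bar{x}_3=f_3\circ w_3$. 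That propagation is exactly the ``combinatorial heart'' you mention, and until it is written out the converse is not established.
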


The properties defined in Theorem \ref{3charact} describe that, for every fixed output of one of the regions, at most one-way signalling is possible between the other two regions. Our goal is to prove a similar characterisation of multipartite process functions in terms of conditional signalling.

\section{Characterization of multipartite process functions}
We are now ready to prove our core result: a characterisation of arbitrary multipartite process functions that generalises Theorem \ref{3charact}. There are in fact two distinct (but equivalent) ways to generalise Theorem \ref{3charact}: given an $N$-partite process function, one can check if all $N-1$-partite functions, obtained by fixing one output, are valid process functions. Alternatively, one can fix all but two outputs, and check if the remaining two regions are at most one-way signalling. Let us start with the first generalisation.

\begin{theorem}[$N$-partite process function]
	\label{fixedthm} 
P[N]: N functions $w_1: \mathcal{X}_{\setminus 1} \rightarrow \mathcal{A}_1$, $w_2: \mathcal{X}_{\setminus 2} \rightarrow \mathcal{A}_2$,  ..., $w_N:\mathcal{X}_{\setminus N} \rightarrow \mathcal{A}_N$, define a process function $w$ if and only if:
\newline

for every $x_i \in \mathcal{X}_i$, the output reduced function $w^{x_i}$ is an N-1 partite process function, for all $i \in 1,2,...,N$.
\end{theorem}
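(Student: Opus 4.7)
The proof is by induction on $N$, with base case $N=3$ given by Theorem \ref{3charact}. The forward direction is essentially a corollary of Lemma \ref{lemma}(i): for any region $i$ and any value $x_i \in \mathcal{X}_i$, take $f_i$ to be the constant operation $f_i(a_i) = x_i$; substituting into Eq.~\eqref{reduced} yields $w^{f_i}_j(x_{\setminus i}) = w_j(x_{\setminus i}, x_i) = w^{x_i}_j(x_{\setminus i})$, so $w^{x_i}$ coincides with $w^{f_i}$ and Lemma \ref{lemma}(i) gives that $w^{x_i}$ is an $(N-1)$-partite process function.

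For the backward direction, assume $w^{x_i}$ is an $(N-1)$-partite process function for every $i$ and every $x_i$. By Lemma \ref{lemma}(ii) it suffices to show, for a single fixed region --- say region $1$ --- that $w^{f_1}$ is an $(N-1)$-partite process function for every $f_1 \in \mathcal{D}_1$. By the inductive hypothesis applied to $w^{f_1}$, this in turn reduces to verifying that $(w^{f_1})^{x_j}$ is an $(N-2)$-partite process function for every $j \neq 1$ and every $x_j \in \mathcal{X}_j$. The key technical step is the commutation identity
\begin{equation}
(w^{f_1})^{x_j} = (w^{x_j})^{f_1},
\end{equation}
which follows from direct expansion of Definitions~\ref{def1} and~\ref{def2}, with the no-self-signalling property $w_1(x) = w_1(x_{\setminus 1})$ allowing the $x_j$ substitution to pass through the inner $w_1$ inside the composition with $f_1$. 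Given this identity, the hypothesis that $w^{x_j}$ is an $(N-1)$-partite process function combined with Lemma \ref{lemma}(i) yields that $(w^{x_j})^{f_1}$ is an $(N-2)$-partite process function, closing the induction.

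The only genuinely non-routine step is the commutation identity above; after that the argument is a clean interplay of the two parts of Lemma \ref{lemma} with the inductive hypothesis. The main obstacle to anticipate is simply the bookkeeping of which indices live in which function --- making sure that $(w^{f_1})^{x_j}$ is naturally an $(N-2)$-partite object on the regions $\setminus\{1,j\}$ regardless of the order in which region $1$ is absorbed (via composition with $f_1$) and region $j$ is absorbed (via fixing the output $x_j$), and that the no-self-signalling identity is invoked at precisely the right moment to make these two orders agree.
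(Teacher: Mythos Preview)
Your proof is correct and follows essentially the same route as the paper's: induction with base case $N=3$, the commutation identity $(w^{f_1})^{x_j}=(w^{x_j})^{f_1}$ as the key computation, and Lemma~\ref{lemma}(i)--(ii) together with the inductive hypothesis P[$N-1$] to close the loop. The only cosmetic difference is that you argue the forward direction by explicitly taking $f_i$ constant so that $w^{f_i}=w^{x_i}$, whereas the paper invokes Lemma~\ref{lemma}(i) directly for $w^{x_i}$ without spelling out this identification; your version is slightly more careful on that point.
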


\begin{proof} 
Consider $N$ space time regions, with the $i$th region's input states denoted as $a_i \in \mathcal{A}_i$ and its output states denoted as $x_i \in \mathcal{X}_i$. We know that if $w$ is a process function, then the output reduced function $w^{x_i}$ must also be a valid process function, as we recognise that all output reduced functions are also reduced functions and can therefore apply point (i) of Lemma \ref{lemma}. This proves one direction of the theorem.

In order to complete the proof, we need to prove the converse as well: If $w^{x_i}$ is a valid $N-1$ partite process function for $i \in 1,2,3,....,N$, then $w$ is a valid $N$-partite process function. The proof will proceed by induction. Firstly, we will prove $P[3]$, and then the implication $P[N-1]$ $\Rightarrow$ $P[N]$. $P[3]$ is proven simply by applying Theorem \ref{3charact}. Next, we assume the induction hypothesis is true for the $P[N-1]$ case: for $N-1$ partite function $w$, if $w^{x_i}$ is a valid $N-2$ partite process function for all $i \in 1,2,3,...,N-1$, $\Rightarrow$ $w$ is a valid $N-1$ partite process function.  

For an arbitrary $N$-partite function $w$ we assume $w^{x_i}$ to be a valid $N-1$ partite process function for all $i \in 1,2,3,...,N$. As a result, by applying (i) of Lemma \ref{lemma}, it is easy to see that the reduced function $(w^{x_i})^{f_1}$ is a valid $N-2$ partite process function for $i \neq 1$. Here, our choice of specifying region 1 as opposed to any of the other regions which are not the $i$th region, when fixing the local operation $f_{1}: \mathcal{A}_{1} \rightarrow \mathcal{X}_{1}$, is arbitrary. Now, the order in which operation is fixed does not affect the resulting function, i.e. we have $(w^{x_i})^{f_1} = (w^{f_1})^{x_i}$.

In order to see this, we start by considering the definition of the reduced function. Definition \ref{def1} states that for each component of a function $w$, the reduced function $w^{f_1}$ is given by the composition of $w$ with $f_1$, $w_{j}^{f_{1}}\left(x_{\setminus 1}\right):=w_{j}\left(x_{\setminus 1}, f_{1}\left(w_{1}\left(x_{\setminus 1}\right)\right)\right), 1 \neq j$. If we also fix the output of region $i \in 2,...,N$, to be $x_i$, then the corresponding output reduced function has components 
\[
\begin{split}
    (w^{f_{1}})^{x_i}_{j}\left(x_{\setminus \{1, i\}}\right) :&= w^{f_{1}}_{j}\left(x_{\setminus \{1, i\}},x_{i}\right) \\ &=  w_{j}\left(x_{\setminus 1}, f_{1}\left(w_{1}\left(x_{\setminus 1}\right)\right)\right),
\end{split}
\]
where we recall that $x_{\setminus \{1, i\}} \cup x_{i} = x_{\setminus 1}$.

 Now it is easy to see that if we fix the output $x_i$ before we fix the function $f_1$, we arrive at the same expression:
\[
\begin{split}
    (w_{j}^{x_i})^{f_1}\left(x_{\setminus \{1, i\}}\right) :&= w^{x_i}_{j}\left(x_{\setminus \{1, i\}}, f_{1}\left(w^{x_i}_{1}\left(x_{\setminus \{1, i\}}\right)\right)\right) \\
    &= 
    w_{j}\left(x_{\setminus \{1, i\}},x_{i}, f_{1}\left(w_{1}\left(x_{\setminus \{1, i\}},x_{i}\right)\right)\right)\\
    &= w_{j}\left(x_{\setminus 1}, f_{1}\left(w_{1}\left(x_{\setminus 1}\right)\right)\right).
\end{split}
    \]

Now, we can apply $P[N-1]$ to show that if $(w^{f_1})^{x_i}$ is valid $N-2$ partite process function, then $w^{f_1}$ must be a valid $N-1$ partite process function. Finally, we can apply point (ii) of lemma \ref{lemma} to conclude that $w$ is a valid process function, thus completing the proof.

\end{proof}
Theorem \ref{fixedthm} can be applied as a simple framework to check if a multipartite function is indeed a valid process function. As we will see later, it is easier to apply Theorem \ref{fixedthm} to verify a valid multipartite process function than check that condition \eqref{condition2} is satisfied. The potential for the application of Theorem \ref{fixedthm} as a method for verifying the validity of multipartite process functions can easily be seen by noticing that Theorem \ref{fixedthm} implies that fixing the outputs of all regions except 2, reduces the remaining output reduced process function to a bipartite function.  

\begin{corollary} \label{lemmanew}
N functions $w_1: \mathcal{X}_{\setminus 1} \rightarrow \mathcal{A}_1$, $w_2: \mathcal{X}_{\setminus 2} \rightarrow \mathcal{A}_2$,  ..., $w_N:\mathcal{X}_{\setminus N} \rightarrow \mathcal{A}_N$, define a process function $w$ if and only if
$w^{x_{\setminus\{l,j\}}}$ is a valid bipartite process function for all $l,j \in 1,2,3,...,N$, $l\neq j$, and $N \geq 3$.
\begin{proof}
We know that if $w$ is an $N$-partite process function, then for all $i \in 1,2,...,N$, $w^{x_i}$ must also be a valid process function by applying either point (i) of Lemma \ref{lemma}, or Theorem \ref{fixedthm} directly. We use the same logic to prove that for all $i,k \in 1,2,...,N$, $i\neq k$, $(w^{x_i})^{x_k}$ must be a valid $N-2$ partite process function. Repeating the argument until we have fixed the output of all regions except two proves that if $w$ is an $N$-partite process function, then $w^{x_{\setminus\{l,j\}}}$ is a valid bipartite process function, proving one direction of the corollary. 

In order to prove the converse, we begin by noting that we can write an arbitrary bipartite function as an output reduced tripartite process function: $w^{x_{\setminus \{j, l\}}} = \left(w^{x_{\setminus \{i, j, l\}}}\right)^{x_i}$ for $i\neq j\neq l$. If, for all distinct $j,l \in 1,2,3,...N$,  $w^{x_{\setminus \{j,l\}}}$ is a valid bipartite process function, then by Theorem \ref{fixedthm}, for all distinct  $i,j,l \in 1,2,3,...N$, $w^{x_{\setminus \{i, j, l\}}}$ must be a valid tripartite process function. We can repeat this argument in order to conclude that for all distinct $i,j,k,l \in 1,2,3,...N$, $w^{x_{\setminus \{i, j, k, l\}}}$ must be a valid quadripartite process function. We can keep applying the same argument until we conclude that $w$ is a valid $N$-partite process function, thus proving the reverse direction of the corollary and hence concluding the proof.
\end{proof}
\end{corollary}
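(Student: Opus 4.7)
My plan is to derive the corollary by iterated application of Theorem~\ref{fixedthm}, running an induction on the number of outputs that remain free. Each step of the induction trades one level of the hierarchy (process functions on $k$ regions) for the corresponding condition one level below (on $k-1$ regions), and the claim falls out after $N-2$ such exchanges.

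For the forward direction, I would start from an $N$-partite process function $w$ and apply Theorem~\ref{fixedthm} (equivalently, Lemma~\ref{lemma}(i)) once to get that $w^{x_i}$ is an $(N-1)$-partite process function for every $i$ and every $x_i\in\mathcal{X}_i$. Since the hypothesis is preserved, I can apply the theorem to each $w^{x_i}$ to obtain that $(w^{x_i})^{x_k}$ is an $(N-2)$-partite process function for every $i\neq k$ and all $x_i,x_k$. Iterating until only two regions remain free yields the desired conclusion that $w^{x_{\setminus\{l,j\}}}$ is a bipartite process function for every distinct $l,j$ and every assignment of the fixed outputs.

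For the reverse direction, I would run the same ladder backwards. The key identity that makes this work is $w^{x_{\setminus\{j,l\}}}=(w^{x_{\setminus\{i,j,l\}}})^{x_i}$, i.e.\ fixing outputs in two stages is equivalent to fixing them all at once. Assume every bipartite $w^{x_{\setminus\{j,l\}}}$ is a process function. Pick any three distinct indices $i,j,l$ and any choice of the remaining $N-3$ outputs; then, uniformly in $x_i$, the tripartite output-reduced function $w^{x_{\setminus\{i,j,l\}}}$ has each of its three one-output reductions equal to a bipartite function that is, by hypothesis, a process function. Theorem~\ref{fixedthm} (in fact its base case, Theorem~\ref{3charact}) then certifies $w^{x_{\setminus\{i,j,l\}}}$ as a tripartite process function. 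A second application of Theorem~\ref{fixedthm} lifts this to quadripartite process functions $w^{x_{\setminus\{i,j,k,l\}}}$, and so on, until after $N-2$ lifting steps we conclude that $w$ itself is an $N$-partite process function.

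The only real obstacle I anticipate is the bookkeeping of universal quantifiers over the fixed outputs. At each lifting step in the reverse direction, Theorem~\ref{fixedthm} requires its ``for every $x_i$'' hypothesis to hold uniformly, and this uniformity must be supplied by the bipartite assumption. Because the corollary assumes each $w^{x_{\setminus\{l,j\}}}$ is a process function as a function of the remaining inputs (not merely at a single point of $\mathcal{X}_{\setminus\{l,j\}}$), the required uniformity genuinely passes through each invocation of the theorem; nevertheless, I would state this explicitly during the induction to avoid conflating ``for some choice'' with ``for every choice'' of the fixed outputs, which is the one place where a careless argument could collapse.
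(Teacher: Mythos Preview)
Your proposal is correct and follows essentially the same route as the paper: iterate Theorem~\ref{fixedthm} downward (via Lemma~\ref{lemma}(i)) for the forward direction, and use the identity $w^{x_{\setminus\{j,l\}}}=(w^{x_{\setminus\{i,j,l\}}})^{x_i}$ together with repeated applications of Theorem~\ref{fixedthm} to climb back up for the converse. Your explicit attention to the uniform quantification over fixed outputs is a welcome clarification but does not alter the argument.
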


 Corollary \ref{lemmanew} explicitly demonstrates the condition that fixing the output of all regions except two arbitrarily picked $l,j \in 1,2,...,N$ determines the direction of signalling between regions $l$ and $j$, and hence the remaining output reduced process function is a bipartite process function:
\[ w^{x_{\setminus \{l,j\}}}(x_l,x_j) := \{w_l(x_j, x_{\setminus \{l,j\}}), w_j(x_l, x_{\setminus \{l,j\}}) \}, \]
where at least one of the two component functions $w^{x_{\setminus \{l,j\}}}_l$, $w^{x_{\setminus \{l,j\}}}_j$ is a constant. The validity of a multipartite process function can be checked by ensuring that, for all $l, j \in 1,2,3,...,N$, $w^{x_{\setminus \{l,j\}}}$ is a valid bipartite process function.

Theorem \ref{fixedthm} and subsequently Corollary \ref{lemmanew} demonstrates that multipartite process functions can at most be conditionally one-way signalling between any pair of regions. In other words, fixing the output of all regions except two, allows at most one-way signalling between the two remaining regions.

\section{Examples}
The above characterisation of process functions allows us to consider specific examples that cannot occur in an ordinary, causally ordered spacetime. An example of a such a process function in three spacetime regions was first presented in Ref.~\cite{Baumeler2016}. This tripartite process function can easily be extended to a quadripartite process function through the addition of a fourth party either in the past or future of the other three parties. However, the existence of the fourth party in the process function does not require the presence of CTCs. In the case where the fourth party is in the future of the other three parties, this simply corresponds to a fourth region where causal order exists from the other three parties to the fourth party. As a result, there is significant motivation to find quadripartite process functions incompatible with causal order between any subsets of parties (this is analogous to the ``genuinely multipartite non-causal correlations'' studied for quantum processes \cite{Abbott2017genuinely}). 

Here, we present examples of such quadripartite process functions. Consider four parties in local regions of spacetime in the presence of CTCs. In this analysis, we simplify the classical state spaces $\mathcal{A}_i$ and $\mathcal{X}_i$ to be binary state spaces. A physical example of what these binary variables may represent is that a 0 or 1 could correspond to the existence or non-existence of a particle on the boundary of the spacetime region. As discussed previously, we do not consider any specific spacetime or dynamical law, but simply investigate what processes are logically possible. 

We define input variables $a_1$, $a_2$, $a_3$, $a_4 \in \{0, 1\}$. We define output variables $x_1$, $x_2$, $x_3$, $x_4 \in \{0, 1\}$. We define the binary addition operator $a \oplus b$: $a,b \in \{0, 1\} \rightarrow \{0, 1\} $ as

\begin{equation} \label{binary}
   a \oplus b = \left\{
\begin{array}{ll}
      0, & a = b \\
      1, & a \neq b. \\
\end{array} 
\right. 
\end{equation}
Using the above notation, we define the quadripartite process function $w$: $(x_1$, $x_2$, $x_3$, $x_4)\rightarrow$ $(a_1$, $a_2$, $a_3$, $a_4)$ as

\begin{equation} \label{fourpartite}
    \begin{array}{ll}
      a_1 = x_4(x_2 \oplus 1)(x_3 \oplus 1) \\
      a_2 = x_1(x_4 \oplus 1)(x_3 \oplus 1) \\
      a_3 = x_2(x_1 \oplus 1)(x_4 \oplus 1) \\
      a_4 = x_3(x_2 \oplus 1)(x_1 \oplus 1).
\end{array} 
\end{equation}
Applying either Theorem \ref{fixedthm} or Corollary \ref{lemmanew}, one can check that this is a valid process function. Equation \eqref{fourpartite} defines a process function in which the input of each region depends non-trivially on the output of the other three regions. In this process, the output of two regions sets the direction of signalling between the other two. For example, Table \ref{table1} displays the resulting inputs of regions $3$ and $4$ for all the possible combinations of the outputs of regions $1$ and $2$.

\onecolumngrid
\vspace{.3 cm}
\begin{center}
\begin{table}[ht!] 
\begin{tabular}{|l|l|l|l|l|}
\hline
Output of  &Output of  & Input of  & Input of & Direction of \\
region 1 ($x_1$)& region 2 ($x_2$) &region 3 ($a_3$)  &   region 4 ($a_4$)& signalling \\ \hline
   0  & 0 & 0  & $x_3$ & 3 signals to 4 \\ \hline
  0   & 1 & $x_4 \oplus 1 $ & 0  & 4 signals to 3 \\ \hline
   1  & 0 &  0& 0 & No signalling \\ \hline
     1  & 1 & 0 & 0 & No signalling \\ \hline
\end{tabular}
\caption{\label{table1} 
Inputs of region 3 and region 4 (denoted $a_3$ and $a_4$ respectively), for every possible combinations of the outputs of region 1 and region 2 (denoted $x_1$ and $x_2$ respectively). The displayed signalling structure is true regardless of which two regions we choose to fix the outputs, due to the symmetry between different components of equation \eqref{fourpartite}.} 
\end{table}
\end{center}
\twocolumngrid

Here it is clear that, depending on the choice of outputs for an observer in region 1 and another observer in region 2, the communication between regions 3 and 4 can either be non-existent (neither region can signal to each other), or at most one-way signalling. As a result, equation~\eqref{fourpartite} is characterised by conditional signalling. For example, if the outputs of regions 1 and 2 are chosen to be $x_1 = 1$, $x_2 = 0$ respectively, then neither one of the observers in regions 3 and 4 can signal to each other. However, if the output of region 1 and region 2 are chosen to be $x_1 = 0$ and $x_2 = 0$ respectively, then an observer in region 3 can signal to an observer in region 4. Crucially, there exist combinations of outputs such that each observer can signal to any observer in another region. 

Our analysis of the signalling between regions in the output reduced process function is demonstrating that when we consider a subset composed of just two regions, there can only be at most one-way signalling between these two regions. This ensures that our assumption of locality and the absence of a grandfather paradox is satisfied, which we have formalised in condition~\eqref{condition2}. However, we stress that in our quadripartite examples, each region can signal to any other for some value of the remaining regions' output. This is not possible in a spacetime without CTCs. Indeed, in the absence of CTCs, there would be one region that is either in the causal past or space-like from each of the other regions. In this case no other region would be able to signal to it, regardless of the output of the other regions.

It was found in Ref.~\cite{Baumeler2016} that, up to relabelling of parties or of inputs/outputs, there exists only one non-trivial tripartite process function with binary inputs and outputs that is compatible with the presence of CTCs and incompatible with any causal order. In other words, all non-trivial tripartite functions compatible with the presence of CTCs are equivalent after a relabelling of parties or of states\footnote{This is not strictly true beyond binary state spaces (for example, the continuous variables example presented in Ref.~\cite{Baumeler2019} cannot be relabelled to be a three bits function). However, Theorem \ref{3charact} implies that all non-causally-ordered tripartite process functions have the same causal structure, in the sense that the output space of each party can be divided in two subsets, with states within each subset corresponding to a fixed direction of signalling between the two other parties.}. However, this is not true for quadripartite process functions. There exists many quadripartite process functions that are not related to one another by relabelling of party or of states. An example of a non-trivial quadripartite process function not related to the one given in equation \eqref{fourpartite} up to relabelling is
\begin{equation} \label{fourpartite2}
    \begin{array}{ll}
      a_1 = x_2 (x_3 \oplus x_4) \\
      a_2 = x_3(x_4(x_1 \oplus 1) \oplus 1) \\
      a_3 = x_4 (x_1 \oplus 1 )(x_2 \oplus 1) \\
      a_4 = x_1(x_2 \oplus 1)(x_3 \oplus 1).
\end{array} 
\end{equation}
It is easy to see that the signalling structure between the four components of equation \eqref{fourpartite2} is different from the signalling structure between the four components of equation \eqref{fourpartite}. For example, if we fix the outputs of regions 2 and 3 to be either $x_2 = 1$, $x_3 = 0$ or $x_2 = 1$, $x_3 = 1$, then an observer in region 4 can signal to an observer in region 1. As a result, equation \eqref{fourpartite2} has produced a scenario in which there are two distinct choices for outputs for two components of the process function which result in the same signalling direction (region 4 to region 1). This scenario does not exist for any choices for the outputs of two distinct components of the process function described by equation \eqref{fourpartite}.

We have shown that there exist distinct non-trivial quadripartite process functions that are compatible with the presence of CTCs and incompatible with any causal order. A numerical search for other quadripartite process functions satisfying Theorem \ref{fixedthm} revealed a large number of non-equivalent quadripartite process functions. In this paper we have presented two examples of such process functions. In comparison to tripartite process functions, quadripartite process functions allow a greater variety in the ways different regions can communicate without causal order in the presence of CTCs.

Of course, one of the main question is whether the abstract examples we have found can be realised as concrete physical systems in some appropriate space-time geometry. Although a full answer is beyond the scope of this work, it is always possible to construct a simplified model that implements any process function. Indeed, it was proven in Ref.~\cite{Baumeler2019} that every process function can be extended to a reversible one, possibly adding extra regions and degrees of freedom. This, in turn, implies that there exist reversible physical processes implementing the function. A prominent example is the  ``billiard ball model'' of computation \cite{Fredkin1982}. In this model, $0$ ($1$) represents the absence (presence) of a billiard ball. Through appropriate collisions, which might involve reflecting walls or additional balls, it is possible to implement any functional relation between initial and final state. To turn this into a time-travelling model, it is sufficient to embed the computation into a spacetime with CTCs, such that the output of the computation is mapped identically to the input of the respective local regions, which are themselves in the past of the collision region. An example with four wormholes, based on the spacetimes in Ref.~\cite{Echeverria:1991ko}, is depicted in Fig.~\ref{billard}.

If the dynamics outside the four local regions in Fig.~\ref{billard} is determined by acausal process functions, such as Eq.~\eqref{fourpartite} or Eq.~\eqref{fourpartite2}, then the agents can communicate with each other, while verifying to be in the past, present and future of each other. As these functions satisfy the self consistency condition \eqref{condition2}, they guarantee that, for every choice of local operations, there is a consistent solution. Therefore, the agents can signal to each other without causal order and without any logical inconsistency. It is an open question how generic is this situation and what are the spacetimes and physical systems for which non-trivial, self-consistent time travel is possible.

\begin{figure}[h] 
\begin{center}
\includegraphics[scale  = 0.36]{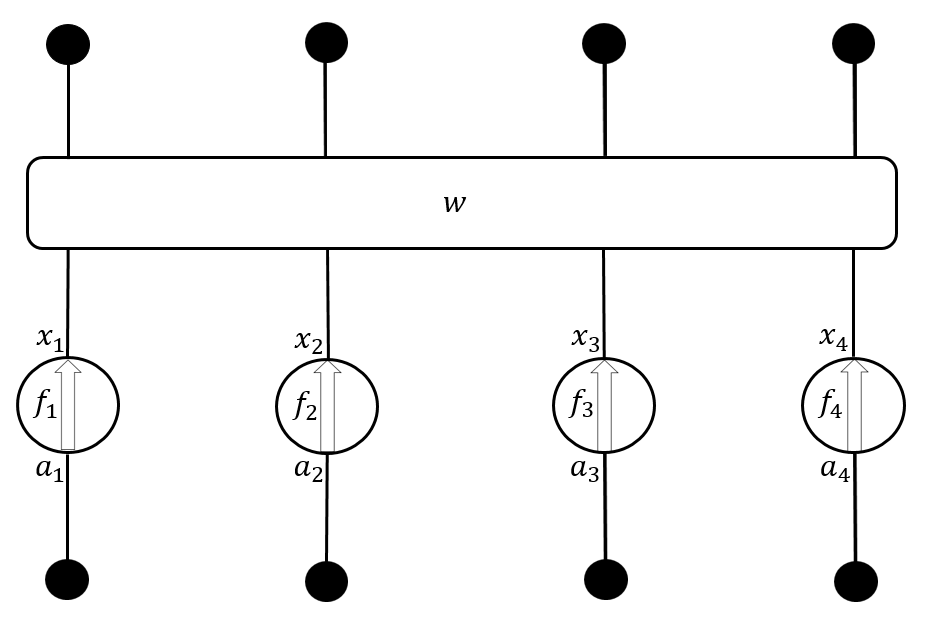}
    \caption{\label{billard} Spatial diagram of a physical realisation of an acausal process function. The example is that of a wormhole spacetime, similar to Ref.~\cite{Echeverria:1991ko}: the metric is Minkowski everywhere, except for eight cylindrical regions (the wormholes' `mouths', appearing as spheres in a spatial section which are the black coloured circles in the picture), which are pairwise identified. A time delay between the mouths causes a world-line entering an upper mouth to exit the corresponding lower mouth at an earlier time in Minkowski coordinates, forming CTCs. Just above each of the lower mouths is a local region, labelled $i=1,\dots 4$, where an agent can either receive ($a_i = 1$) or not receive ($a_i = 0$) a billiard ball along a prescribed world-line (the picture shows the projection of the world-lines on a time slice containing the local regions). The agents can then choose whether or not to send a billiard ball out of the local region. This intervention is represented by the function $f_i$, while the output is $x_i=1,0$. In the space between the regions and the upper mouths, there is a ``billiard ball computer'' \cite{Fredkin1982}, which implements the process function $w$. The properties of the process function ensure that a unique solution for the presence or absence of a ball at each point of the considered worldlines exists for each choice of local operations.}
\end{center}
\end{figure}


\section{Conclusions}
We have developed a characterisation of deterministic processes in the presence of CTCs for an arbitrary number of localised regions. Our proofs have demonstrated that non-trivial time travel between multiple regions is consistent with the absence of a logical paradox as long as once the outputs of all but two regions are fixed, at most one-way signalling is possible.

The most significant result of our work is our discovery of distinct non-trivial quadripartite process functions which are compatible with the presence of CTCs. This demonstrates that when multiple local regions communicate with each other in the presence of CTCs, there is a broad range of communication scenarios which still allow freedom of choice for observers in each region without the development of a logical inconsistency such as a grandfather paradox. The range of distinct communication scenarios which are consistent with the presence of CTCs proves that the way CTCs allow multiple observers in distinct regions to communicate is not overly restricted by a conflict between locality, freedom of choice, and logical consistency. As a result, we have demonstrated that there is a range of scenarios in which multiple observers can communicate without causal order in a classical framework. Our results are derived in an abstract framework, that does not depend on the details of the dynamics or of the space-time geometry. Further studies will be necessary to find genuine physical scenarios realising the acausal processes we have discovered.

\section{Code Availability}
The code is available at \url{https://github.com/gtobarSU/Classical-Process-Matrix-Search-Code.git}.

	{\bf Acknowledgments.}
We thank S.\ Staines for collaboration during an undergraduate winter research project at the University of Queensland. F.C.\ acknowledges support through an Australian Research Council Discovery Early Career Researcher Award (DE170100712).  We acknowledge the traditional owners of the land on which the University of Queensland is situated, the Turrbal and Jagera people.

\bibliography{refs}

\end{document}